\let\c@author\relax
\newcommand{\R}{\mathbb{R}}
\newcommand{\Zc}{\CMcal{Z}}
\newcommand{\U}{\CMcal{U}}
\newcommand{\W}{\CMcal{W}}
\newtheorem{proof}{Proof}[section]
\newtheorem{theorem}{Theorem}[section]
\newtheorem{lemma}{Lemma}[section]
\newtheorem{corollary}{Corollary}[section]
\theoremstyle{definition}
\newtheorem{definition}{Definition}[section]
\begin{document}

\begin{frontmatter}

\title{Blind Identification of Fully Observed Linear Time-Varying Systems via Sparse Recovery}


\author[EECS]{Roel Dobbe}\ead{dobbe@berkeley.edu},    
\author[EECS]{Stephan Liu}\ead{stephan.x.liu@berkeley.edu},               
\author[HUST]{Ye Yuan}\ead{ye.yuan@outlook.com},  
\author[EECS]{Claire Tomlin}\ead{tomlin@eecs.berkeley.edu}  

\address[EECS]{Department of Electrical Engineering \& Computer Sciences, UC Berkeley, USA}  
\address[HUST]{School of Automation, Huazhong University of Science and Technology, People's Republic of China}

\begin{keyword}                           
Blind Identification; Discrete-Time; Linear Time-Varying Systems; Data Science; Compressive Sensing; Experimental Design       
\end{keyword}                             

\begin{abstract}                          
Discrete-time linear time-varying (LTV) systems form a powerful class of models to approximate complex dynamical systems with nonlinear dynamics for the purpose of analysis, design and control. 
Motivated by inference of spatio-temporal dynamics in breast cancer research, we propose a method to efficiently solve an identification problem for a specific class of discrete-time LTV systems, in which the states are fully observed and there is no access to system inputs. In addition, it is assumed that we do not know on which states the inputs act, which can change between time steps, and that the total number of inputs is sparse over all states and over time. 
The problem is formulated as a compressive sensing problem, which incorporates the effect of measurement noise and which has a solution with a partially sparse support.
We derive sufficient conditions for the unique recovery of the system model and input values, which lead to practical conditions on the number of experiments and rank conditions on system outputs.
Synthetic experiments analyze the method's sensitivity to noise for randomly generated models.
\end{abstract}

\end{frontmatter}



\section{Introduction}
\label{sec:introduction}

Many complex dynamical systems, such as power grids or biological systems, exhibit nonlinear dynamics.
Unfortunately, the formulation of nonlinear system identification is generally hard or intractable unless system structure can be exploited or an efficient black box model structure is used to approximate the system's dynamics \cite{nelles_nonlinear_2013}.
Nonlinear system dynamics are therefore often approximated by piecewise-affine or \emph{discrete-time linear time-varying} (LTV) models.
In addition, efforts to identify a dynamical model for analysis or control design can be hampered for certain systems or applications, due to a lack of access to the \emph{inputs} (or \emph{disturbances}) entering the system.
An example is detection and mitigation of malicious attacks on cyber-physical systems \cite{pasqualetti_attack_2013}, or inferring temporal protein-protein interactions in gene regulatory networks (GRNs).
The \emph{blind system identification problem} (BSI) assumes that the values of inputs and disturbances are unknown and aims to retrieve these together with the parameters of a dynamical system model from measured outputs of the system dynamics~\cite{hua_blind_2002}.
This problem is inherently challenging and typically requires exploiting structure of the system dynamics. Here, we apply the Occam's razor principle by exploiting sparsity in identification; not in the parameterization of the dynamic model (as generally accepted~\cite{chiuso_bayesian_2012}), but in parameterizing the unknown inputs of the BSI problem.

The approach is primarily motivated by inference problems in breast cancer biology that aim to capture protein-protein interactions in gene regulatory networks (GRNs); whether these exist, how strong these are and how these change over time \cite{heiser_et_al._subtype_2012}. 
Improving our understanding of how drugs and mutations affect GRNs is critical for effective and personalized treatment design. 
As such, the aim of this work is to retrieve both the input effects of drugs, the interaction dynamics between proteins, and how these change over time.
These effects tend to be nonlinear and time-varying. 
In earlier work, a discrete-time linear time-varying modeling structure was used to approximate these dynamics \cite{dobbe_heterogeneity_2015}. 
Here, we had access to measurements of concentrations for \emph{all} proteins in the GRN, so that the output represents a \emph{fully observed} state vector at every time instance. This assumption is increasingly practical in biological experiments due to rapid developments of measurement technologies, such as Reverse Phase Protein Array (RPPA), which allow high sensitivity and sample throughput of protein level measurements at a reasonable cost per sample \cite{akbani_realizing_2014}.
In such settings, the system input, which is the effect that a drug or a mutation has on a GRN, is typically unknown. However, it is generally true that the input is \emph{sparse}; it affects a relatively small number of states during a small number of time steps.
 
 
%
%

\subsection*{Prior Work}
BSI is known to be a difficult problem that is generally ill-posed. BSI of time-varying systems is known as a challenging problem, as compared to time-invariant systems.
It is well known that in order to reliably retrieve the input and/or system parameters, further information about the system is needed~\cite{hua_blind_2002}. 
Hence, all prior works in different areas all impose a certain structure that is rich enough to represent complex system dynamics and simple enough to allow for identification.

Originally, blind identification was well studied for Finite Impulse Response (FIR) systems, in which the filter represents the system's dynamics. For an extensive overview of BSI for time-invariant systems, the reader is referred to~\cite{abed-meraim_blind_1997,hua_blind_2002}. 
More recently, the advent of ubiquitous sensing and data collection has spurred new efforts to perform BSI for larger-scale multi-input, multi-output systems~\cite{ohlsson_blind_2014,segarra_blind_2017}.
The BSI literature for systems with time-varying dynamics is sparse. Typically these systems are modeled as an extension of time-invariant FIR filters using a basis extension approach~\cite{tsatsanis_subspace_1997,tugnait_linear_2002,giannakis_basis_1998}. 
These results are solely for single-input, multi-output systems. In addition, FIR model cannot model feedback dynamics, which requires infinite impulse response (IIR) models. Unfortunately, IIR systems driven by unknown inputs are inherently not identifiable~\cite{hua_blind_2002}.
In control theory, identification methods have been proposed for certain classes of time-varying systems that are restricted to certain structures and parameter changes.
In \cite{liu_identification_1997}, a discrete-time LTV state space model is identified, assuming no input and stable dynamics.
A widely studied approach is that of Linear Parameter-Varying (LPV) systems, for which identification procedures are proposed by \cite{lee_identification_1999}, \cite{verdult_subspace_2002} and \cite{bamieh_identification_2002}, and for which \cite{toth_modeling_2010} provides a broad and rigorous overview.
These approaches tend to be hard to scale \cite{toth_modeling_2010}, and none of these time-varying methods consider scenarios with unknown inputs.

In network inference, sparse recovery theory has been applied and further developed to reconstruct networks from data exploiting the sparsity in network connectivity \cite{napoletani_reconstructing_2008,yuan_robust_2011,zavlanos_inferring_2011,hayden_sparse_2016,chang_data-driven_2012}. 
Most work in this area assumes a linear time-invariant (LTI) model that governs the dynamic propagation of signals, often without any external inputs. \cite{sanandaji_compressive_2011} develops a method for inferring autoregressive models with exogenous inputs (ARX), in which the parameter vector changes a limited number of times. 
The use and identification of time-varying graphical models are proposed in \cite{ahmed_recovering_2009,kolar_estimating_2010,lebre_statistical_2010,wang_time_2011}.
\cite{casteigts_time-varying_2012} gives a general overview of how time-varying graphs and dynamic networks are used in different fields and application. 
In \cite{ahmed_recovering_2009},  a $\ell_1$-regularized logistic regression formalism is used to capture network structure and its changes over time. 
While scaling well to larger networks, the method does not consider the effect of external inputs.
\cite{lebre_statistical_2010} introduces auto-regressive time-varying models to describe and infer gene-regulation networks and infers the model using a Reversible Jump Markov Chain Monte Carlo procedure. 
This model class neatly encodes the time-varying dynamics with an LTV mapping, but does not consider the effect of external inputs.

\subsection*{Contributions}
We propose a blind system identification method for discrete-time LTV dynamical systems with four important characteristics: multiple inputs that have a sparse effect on the system state and over time, internal feedback dynamics, a fully observed state vector, and repeated experiments.
The first two characteristics address relevant open challenges in the literature of blind identification of time-varying systems. 
The third characteristic is an assumption that is practical in the context of inferring GRNs, and, in some sense, represents the price to pay to overcome the complexity of the former two characteristics. 
In addition, we assume that experiments can be repeated multiple times, with the same time-varying dynamics but different input values, in order to collect sufficient data for identification.
This is a fair assumption for biological studies, in which experimental conditions can be replicated efficiently, but it can form a challenge for other applications. 

\subsection*{Notation}
Denote by $A^{\top}$ the transpose of a matrix, and by $\text{vec}(\cdot) : \R^{m \times D} \to \R^{mD}$ the function that vectorizes a matrix column-by-column. The function $\| x \|_0$  (the ``$\ell_0$-norm'') returns the number of nonzero entries in the vector $x$, which is said to be $s$-sparse if at most $s$ of its entries are nonzero: $\| x \|_0 \le s$.
We will use subscript $i = 1,\hdots,n$ to denote the $i$-th entry of a vector in $\R^n$.


\section{Problem Formulation}
\label{sec:formulation}

In this section, we first represent an experimental data set as the evolution of a dynamical system using an LTV modeling framework. 
We then formulate the system identification and input retrieval problem in a sparse recovery framework.

The problem is formulated as an experimental design problem, with the aim to understand necessary and sufficient conditions on perturbations and collected output measurements of system dynamics that guarantee successful inference of parameters related to system dynamics and inputs/disturbances.
Consider a series of $q$ experiments, representing different perturbations and samples of the system state $z \in \R^n$. State output measurements are taken at $k_f$ moments, not necessarily equally sampled through time, but at fixed instances for all experiments.
The dynamics of the LTV system during an experiment $j$ are modeled as:
\begin{equation}
\begin{split}
z^{(j)}[k+1] &= A[k] z^{(j)} [k] + u^{(j)} [k] + w^{(j)} [k], \\
&  j \in \{1,\hdots,q\}, \  k \in \{0,\hdots,k_f-1\},
\end{split}
\label{eq:LTVdyn}
\end{equation}
where $z^{(j)} [k], u^{(j)} [k] , w^{(j)} [k] \in \mathbb{R}^n$ are the state vector, input vector and noise vector of a single experiment $j$, and $A[k] \in \mathbb{R}^{n \times n}$ is a matrix describing the dynamical interactions between the state variables for the transition from time $k$ to $k+1$. The matrices $A[k]$ are constant across experiments. 
The state vector $z^{(j)}[k]$ is assumed to be measured at all $k \in \{0,\hdots,k_f\}$.
We consider scenarios for which the noise is bounded by $\| w^{(j)}[k] \|_2 \le \eta_j, \forall j,k$.
Note that we allow for the same number of inputs as states - these are both $n$-dimensional.
The inputs can vary over the different experiments. We assume that the inputs are sparse over all $q$ experiments, that is out of all $n k_f q$ input values only $s < n k_f q$ are nonzero. 
By penalizing sparsity in our problem formulation, we will ensure that the number of nonzero inputs is low.
The central questions of this work are, given the collection of dynamic output data $z^{(j)}[k]$, how and under what conditions can we correctly infer the parameters in \eqref{eq:LTVdyn}: $A[k]$ ($n^2 k_f$ values) and the unknown inputs $u^{(j)}[k]$ ($s$ values) for $ j \in \{1,\hdots,q\}, \  k \in \{0,\hdots,k_f-1\}$?


For each time step $k$, we stack all our experiments together into matrices for states, inputs and noise vectors:
\begin{equation}
\begin{split}
\CMcal Z_k &= \left[ z^{(1)}[k] \cdots z^{(q)}[k]  \right] \in \R^{n \times q} \,, \\
\CMcal U_k &= \left[ u^{(1)}[k] \cdots u^{(q)}[k]  \right] \in \R^{n \times q} \,, \\
\CMcal W_k &= \left[ w^{(1)}[k] \cdots w^{(q)}[k]  \right] \in \R^{n \times q} \,. \\
\end{split}
\label{eq:stacking}
\end{equation}
We further organize the data and variables as:
\begin{equation}
\mathbf Z \triangleq 
\left[
\begin{array}{c}
\Zc_1 \\
\Zc_2 \\
\vdots \\
\Zc_{k_f} \\
\end{array}
\right] \ , \ 
\mathbf U \triangleq 
\left[
\begin{array}{c}
\U_0 \\
\U_1 \\
\vdots \\
\U_{k_f-1} \\
\end{array}
\right] \ ,
\mathbf W \triangleq 
\left[
\begin{array}{c}
\W_0 \\
\W_1 \\
\vdots \\
\W_{k_f-1} \,. \\
\end{array}
\right]
\end{equation}
We then vectorize $\mathbf Z$, $\mathbf U$, $\mathbf W$ and $A[k]$ as follows: $\pmb{z} = \text{vec}(\mathbf Z) \in \R^{n k_f q}$, $\pmb{u} = \text{vec}(\mathbf U) \in \R^{n k_f q}$, $\pmb{w} = \text{vec}(\mathbf W) \in \R^{n k_f q}$ and $\pmb{a} = \text{vec}(A[0], \cdots, A[k_f-1]) \in \R^{n^2 k_f}$ (or $\R^{n^2}$ for LTI models). 
The $i$-th scalar output in \eqref{eq:LTVdyn} can be rewritten as 
\begin{equation}
z_i^{(j)}[k+1] = \displaystyle \sum_{l = 1}^{n} a_{il}[k] z_l^{(j)}[k] + u_i^{(j)}[k] + w_i^{(j)}[k] \,,
\end{equation}
for $i = 1,\hdots,n$, $j = 1, \hdots, q$ and $k = 0,\hdots, k_f$. Here, $a_{il}[k]$ denotes the entry in $A[k]$ on the $i$-th row and $l$-th column.
Equivalently, the dynamics of one experiment can be formulated as
\begin{equation}
\begin{array}{l}
z^{(j)}[k+1] = \\
\left[ \ 0_{n \times n( k_f (j-1) + k - 1)} \ | \ I_{n}  \ | \ 0_{n \times n( k_f(q - j) + (k_f - k) )} \right]  \pmb{u} + \\ 
 \left[  \ 0_{n \times n^2(k-1)} \ | \ I_n \otimes (z^{(j)}[k])^{\top} \ | \ 0_{n \times n^2( k_f - k)} \right]
\pmb{a} + w^{(j)}[k] \,.
\end{array}
\label{eq:ConstrCSexp}
\end{equation}
Or in short
\begin{equation}
z^{(j)}[k+1]  =  \left[ \ \psi_u^{(j)}[k] \ | \ \psi_a^{(j)}[k] \ \right]
\left[
\begin{array}{c}
\pmb{u} \\
\pmb{a} 
\end{array}
\right] + w^{(j)}[k] \,.
\label{eq:ConstrCSexp2}
\end{equation}
Here, $\otimes$ denotes the Kronecker product. By stacking this equation vertically for all time steps $k = 1, \hdots, k_f$ and experiments $j = 1, \hdots, q$ we can construct
\begin{equation}
\begin{array}{rcl}
\pmb{z} &=& \left[ \ \Psi_u \ | \ \Psi_a \ \right] 
\left[
\begin{array}{c}
\pmb{u} \\
\pmb{a} 
\end{array}
\right] + \pmb{w}
= \Psi 
\left[
\begin{array}{c}
\pmb{u} \\
\pmb{a} 
\end{array}
\right] + \pmb{w} \,,
\end{array}
\label{eq:ConstrCS}
\end{equation}
where $\Psi_u = I_{n k_f q}$ and
\begin{equation}
\Psi_a = 
\left[
\begin{array}{c}
\text{blkdiag} \left( I_n \otimes (z^{(1)}[k])^{\top} \right) \\
\text{blkdiag} \left( I_n \otimes (z^{(2)}[k])^{\top} \right) \\
\vdots \\
\text{blkdiag} \left( I_n \otimes (z^{(q)}[k])^{\top} \right) \\
\end{array}
\right] \in \R^{n k_f q \times n^2 k_f} \,.
\label{eq:Psi_a}
\end{equation}
Here, for each experiment $j = 1, \hdots, q$, blkdiag$(\cdot)$ constructs a block-diagonal matrix with blocks $I_n \otimes (z^{(j)}[k])^{\top}$ for $k = 0, \hdots k_f - 1$.
Note that in the case of an LTI system, the block diagonal structure collapses, resulting in
\begin{equation}
\label{eq:Psi_a_lti}
\Psi_a = 
\left[
\begin{array}{c}
I_n \otimes (z^{(1)}[0])^{\top} \\
\vdots \\
I_n \otimes (z^{(1)}[k_f-1])^{\top} \\
\vdots \\
I_n \otimes (z^{(q)}[0])^{\top} \\
\vdots \\
I_n \otimes (z^{(q)}[k_f-1])^{\top} \\
\end{array}
\right] \in \R^{n k_f q \times n^2} \,.
\end{equation}
$\Psi \in \R^{n k_f q \times (n k_f q + n^2 k_f)}$ denotes the sensing matrix and $\pmb{w} \in \R^{n k_f q}$ a vector with stacked measurement noise values. By exploiting prior knowledge about the statistics of $\pmb{w}$, we can determine a bound on $\ell_2$-norm: $\| \pmb{w} \|_2 \le \bm{\eta}$, and hence we can formulate the constraint
\begin{equation}
\| \pmb{z} - \Psi_u \pmb{u} -  \Psi_a \pmb{a} \|_2 \le \eta \,.
\label{eq:CSconstr}
\end{equation}
Exploiting the sparsity of $\pmb{u}$, a compressive sensing formulation for inferring the unknowns $(\pmb{u},\pmb{a})$ with noisy measurements now reads
\begin{equation}
\label{eq:noisyCS}
\min_{\pmb{u}, \pmb{a}} \  \|  \pmb{u} \|_1 \,, \quad
\text{subject to} \quad \| \pmb{z} - \Psi_u \pmb{u} -  \Psi_a \pmb{a} \|_2 \le \bm{\eta} \,. 
\end{equation}
Note that we have assumed that the vector $\pmb{a}$, representing all parameters in the dynamics matrices $A[k]\,, k = 0,\hdots,k_f-1$, is not sparse. This is a realistic assumption, as the discrete time matrices $A[k]$ typically are integrals over some continuous dynamics representing the propagation of dynamic interactions over the state space, leading to a dense matrix even if few state interactions exist.
We therefore attempt to find a \emph{partially sparse} solution in which sparsity is only enforced on $\pmb{u}$, and not necessarily on $\pmb{a}$. 
If an application yields sparsity in $\pmb{a}$, this can be addressed by adding the $\ell_1$-norm of $\pmb{a}$, yielding a compressive sensing problem with a fully sparse support.

\section{Analysis}
\label{sec:analysis}

Consider the measurement equation
\begin{equation}
y = \Psi x + w \,, 
\label{eq:measeq}
\end{equation}
where $y \in \R^m, \bar x \in \R^D$, and $\| w \|_2 \le \eta$ is a bounded noise signal. 
In general, $m<D$ yields an underdetermined system of equations, with an infinite number of solutions. 
It turns out that if the \emph{sensing matrix} $\Psi$ adheres to certain conditions and the signal~$\bar x$ that generated the data~$y$ is sufficiently sparse, then $\bar x$ can be retrieved exactly from far fewer measurements (i.e. $m \ll D$) than asserted by the Nyquist sampling theorem \cite{bruckstein_sparse_2009}.
The sparsest solution to the underdetermined system of equations $y = \Psi x$ can be found by solving:
\begin{equation}
\displaystyle \min_{x \in \R^D} \| x \|_0 \,, \quad \text{subject to} \quad \| y - \Psi x \| \le \eta \,.
\label{eq:CSP0}
\end{equation}
\begin{lemma} (Unique retrieval of the sparsest solution \cite[Lemma 2.1]{donoho_for_2006})
If the sparsest solution to \eqref{eq:CSP0} has $\| x \|_0 = s$ and $D \ge 2s$ and all subsets of $2s$ columns of $\Psi$ are full rank, then this solution is unique.
\label{thm:uniqueness}
\end{lemma}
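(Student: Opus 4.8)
The plan is to argue by contradiction, exploiting the constraint that the full-rank hypothesis imposes on the null space of $\Psi$. The uniqueness claim is naturally read in the exact-measurement case $y = \Psi x$ (corresponding to $\eta = 0$ in \eqref{eq:CSP0}); the entire argument concerns only which vectors $\Psi$ annihilates and is insensitive to the precise value of $\eta$. So I would fix the interpretation that the ``sparsest solution'' satisfies $\Psi x = y$ exactly and has $\|x\|_0 = s$.

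Suppose, toward a contradiction, that two distinct vectors $x_1 \ne x_2$ both attain the minimal $\ell_0$-value $s$ while satisfying $\Psi x_1 = \Psi x_2 = y$. The first step is to form the difference $h = x_1 - x_2$, which is nonzero and lies in the null space of $\Psi$, since $\Psi h = \Psi x_1 - \Psi x_2 = 0$. The key observation is a support count: a nonzero entry of $h$ can occur only at an index where $x_1$ or $x_2$ is nonzero, so $\|h\|_0 \le \|x_1\|_0 + \|x_2\|_0 \le 2s$.

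The second step is to convert this sparse null-space vector into a linear dependence among columns of $\Psi$. Writing $S = \operatorname{supp}(h)$ with $|S| \le 2s$ and denoting by $\psi_i$ the $i$-th column, the identity $\Psi h = 0$ reads $\sum_{i \in S} h_i \psi_i = 0$ with not all $h_i$ vanishing; hence the columns indexed by $S$ are linearly dependent. Since $D \ge 2s$, I can enlarge $S$ to an index set $T$ with $|T| = 2s$, and the $m \times 2s$ submatrix $\Psi_T$ then contains the dependent columns $\{\psi_i\}_{i \in S}$, so it is rank deficient. This contradicts the hypothesis that every $2s$-column subset of $\Psi$ has full rank, and therefore two distinct $s$-sparse solutions cannot coexist.

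The one delicate point, and the step I would treat most carefully, is the bookkeeping when $|S|$ is strictly smaller than $2s$: the hypothesis is phrased for subsets of exactly $2s$ columns, so I must use $D \ge 2s$ to embed the possibly smaller dependent set $S$ into a set $T$ of size exactly $2s$ and then invoke that a matrix containing a linearly dependent subset of columns cannot have full column rank. This embedding is precisely where the dimension assumption $D \ge 2s$ is needed; every other step is a direct consequence of counting supports and of reading $\Psi h = 0$ as a column dependence.
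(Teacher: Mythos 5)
Your proof is correct and is essentially the standard argument from the cited source (Donoho's Lemma~2.1), which the paper itself does not reprove: the difference of two $s$-sparse solutions is a nonzero null-space vector with at most $2s$ nonzeros, contradicting the full-rank hypothesis on every set of $2s$ columns (with $D \ge 2s$ used exactly as you note, to pad the support to size $2s$). Your observation that the uniqueness claim must be read in the exact case $\eta = 0$ is also the right interpretation, matching the original statement for the equation $y = \Psi x$.
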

Notice that this Lemma assumes that a $s$-sparse data-generating signal~$x$ exists.
In general, \eqref{eq:CSP0} is a NP-hard optimization problem that is both combinatorial and non-convex, and hence impractical to solve. 
In contrast, the \emph{Basis Pursuit} method \cite{chen_atomic_2001} solves the convex relaxation of \eqref{eq:CSP0} efficiently,
\begin{equation}
\displaystyle \min_{x \in \R^D} \| x \|_1 \,, \quad \text{subject to} \quad \| y - \Psi x \| \le \eta \,.
\label{eq:CSP1}
\end{equation}
\begin{definition}
The spark of a matrix $\Psi$ is the smallest number of columns of $\Psi$ that are linearly dependent \cite{donoho_optimally_2003}, which is upper bounded by $rank(\Psi)+1$.
\end{definition}
Given this definition, the following lemma provides sufficient conditions for equivalence between the compressed sensing problem \eqref{eq:CSP0} and its convex relaxation \eqref{eq:CSP1}.
\begin{lemma} 
\label{lm:equiv}
(Spark Equivalence Condition \cite{donoho_optimally_2003})
For the system of linear equations $\Psi x = y$ ($\Psi \in \R^{m \times D}$ full-rank with $m < D$), if a solution $x$ exists obeying
\begin{equation}
\| x  \|_0 < \frac{1}{2} \text{spark} (\Psi) \,,
\end{equation}  
that solution is both the unique solution to the convex relaxation \eqref{eq:CSP1}, and the unique solution to the original NP-hard compressive sensing problem \eqref{eq:CSP0}.
\end{lemma}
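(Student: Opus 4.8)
The plan is to establish the two halves of the equivalence separately: uniqueness for the $\ell_0$ problem~\eqref{eq:CSP0}, and then uniqueness for the $\ell_1$ relaxation~\eqref{eq:CSP1}, both driven by the single hypothesis $\|x\|_0 < \tfrac{1}{2}\,\text{spark}(\Psi)$. Throughout I work in the noiseless regime $\eta = 0$, so the constraint reads $\Psi x = y$, and I write $s = \|x\|_0$ for the given solution.

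First I would handle $\ell_0$ uniqueness directly from the definition of spark. Suppose $x'$ is any other feasible point, so that $\Psi x = \Psi x' = y$ and hence $\Psi(x - x') = 0$. If $x'$ were also a sparsest solution, then $\|x'\|_0 \le s$ as well, and the difference $x - x'$ would be a nonzero vector in the null space of $\Psi$ with support of size at most $\|x\|_0 + \|x'\|_0 \le 2s < \text{spark}(\Psi)$. But by definition the smallest number of linearly dependent columns of $\Psi$ is $\text{spark}(\Psi)$, so no nonzero null-space vector can have fewer than $\text{spark}(\Psi)$ nonzero entries; this forces $x - x' = 0$. Thus $x$ is the unique minimizer of~\eqref{eq:CSP0}. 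This step is short and is the mechanism by which the spark condition buys uniqueness; note that it is exactly the content underlying Lemma~\ref{thm:uniqueness}, since $2s < \text{spark}(\Psi) \le \text{rank}(\Psi)+1$ guarantees every set of $2s$ columns is independent.

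Next I would prove that the same $x$ is the unique solution of the convex program~\eqref{eq:CSP1}. The standard route is the null-space argument. Let $x'$ be any feasible point of~\eqref{eq:CSP1} with $x' \neq x$, and set $h = x' - x \in \ker(\Psi)$, with $h \neq 0$. Writing $S = \text{supp}(x)$ so that $|S| = s$, I would split $\|x'\|_1 = \|x + h\|_1$ into its contributions on $S$ and on the complement $S^c$, using $\|(x+h)_S\|_1 \ge \|x_S\|_1 - \|h_S\|_1$ and $\|(x+h)_{S^c}\|_1 = \|h_{S^c}\|_1$, to obtain $\|x'\|_1 \ge \|x\|_1 + \|h_{S^c}\|_1 - \|h_S\|_1$. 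Hence strict optimality of $x$ follows once I show the null-space property $\|h_S\|_1 < \|h_{S^c}\|_1$ for every nonzero $h \in \ker(\Psi)$ with $|S| = s$. The hard part will be deriving this null-space property purely from $2s < \text{spark}(\Psi)$ rather than from a coherence bound. The clean way is to argue that $h_S = 0$ is impossible in isolation: since $h \in \ker(\Psi)$ and $h \neq 0$, its support must contain at least $\text{spark}(\Psi) > 2s$ indices, so at most $s$ of them lie in $S$ and strictly more than $s$ lie in $S^c$, which already gives a comparison of support sizes but not yet of $\ell_1$ masses. To upgrade support counting to the $\ell_1$ inequality I would invoke that any $2s$ columns of $\Psi$ are linearly independent to rule out the boundary case, and in particular observe that a minimizer with $h_S$ dominating would produce a second $s$-sparse representation contradicting the $\ell_0$ uniqueness just established; making this implication rigorous (rather than merely plausible) is the crux of the argument, and is where one must be careful not to silently import a mutual-coherence hypothesis. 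I would therefore prefer to cite the null-space characterization of Basis Pursuit and verify that $2s < \text{spark}(\Psi)$ implies it, concluding that $x$ is the unique solution to both~\eqref{eq:CSP0} and~\eqref{eq:CSP1}.
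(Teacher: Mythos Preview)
The paper does not prove this lemma; it is quoted from \cite{donoho_optimally_2003} and used as a black box, so there is no in-paper argument to compare against---only your attempt to supply one.

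Your $\ell_0$ uniqueness argument is correct and standard. The gap is in the $\ell_1$ half. You correctly reduce the question to the null-space property $\|h_S\|_1 < \|h_{S^c}\|_1$ for every nonzero $h \in \ker(\Psi)$ and every $|S|\le s$, and you correctly flag that support counting ($\|h\|_0 > 2s$) does not by itself control $\ell_1$ mass. But your proposed closing step---``cite the null-space characterization of Basis Pursuit and verify that $2s < \text{spark}(\Psi)$ implies it''---cannot be completed, because that implication is false. A concrete obstruction: take $\Psi \in \R^{3\times 4}$ with columns $a_2,a_3,a_4$ the standard basis of $\R^3$ and $a_1 = -\tfrac{1}{10}(1,1,1)^\top$. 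Then $\text{spark}(\Psi)=4$, so $x=(1,0,0,0)$ satisfies $\|x\|_0 = 1 < \tfrac12\,\text{spark}(\Psi)$. Yet $h=(10,1,1,1)$ spans $\ker(\Psi)$, and with $S=\{1\}$ one has $\|h_S\|_1 = 10 > 3 = \|h_{S^c}\|_1$; indeed $x' = (0,-\tfrac{1}{10},-\tfrac{1}{10},-\tfrac{1}{10})$ is feasible with $\|x'\|_1 = 0.3 < 1 = \|x\|_1$, so Basis Pursuit does not recover $x$.

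In other words, the lemma as stated conflates two distinct results of \cite{donoho_optimally_2003}: the spark bound $\|x\|_0 < \tfrac{1}{2}\,\text{spark}(\Psi)$ yields $\ell_0$ uniqueness, while $\ell_0$--$\ell_1$ equivalence is proved there under the stronger mutual-coherence bound $\|x\|_0 < \tfrac{1}{2}\bigl(1 + 1/\mu(\Psi)\bigr)$ (recall $\text{spark}(\Psi) \ge 1 + 1/\mu(\Psi)$, so the coherence condition is the more restrictive one). Your instinct that ``one must be careful not to silently import a mutual-coherence hypothesis'' was exactly right: the $\ell_1$ conclusion does not follow from the spark hypothesis alone, so the gap you sensed is not a defect of your argument but of the statement you were asked to prove.
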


Returning to our central problem \eqref{eq:noisyCS}, the support of the solution is partially sparse due to $\pmb{a}$ being a potentially dense vector. 
We determine under what conditions, the vectors $(\bar{\pmb{u}},\bar{\pmb{a}})$ that generated measurements $\pmb{z}$ can be retrieved.
Let $s_{\pmb{u}} = \| \bar{\pmb{u}} \|_0$ denote the number of nonzero entries of $\bar{\pmb{u}} \in \R^{n k_f q}$.
Denote the fraction of nonzero entries in $\bar{\pmb{u}}$ as
\begin{equation}
\rho_{\pmb{u}} \triangleq \frac{s_{\pmb{u}}}{n k_f q} \,.
\label{eq:cardcond}
\end{equation}

\begin{theorem} 
Suppose that the signal $(\bar{\pmb{u}},\bar{\pmb{a}})$ that generated the measurements~$\pmb{z}$, as in~\eqref{eq:ConstrCS}, is also the sparsest solution to~\eqref{eq:noisyCS} with $\| \bar{\pmb{a}} \|_0 = n^2 k_f$, and $\| \bar{\pmb{u}} \|_0 = s_{\pmb{u}}$ with $\rho_{\pmb{u}} \le \frac{1}{2}$. If $\Psi_a$ is full column rank, then the solution to \eqref{eq:noisyCS} is unique and equivalent to the solution of the NP-hard $\ell_0$-problem.
\label{thm:sparsity}
\end{theorem}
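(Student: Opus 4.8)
The plan is to remove the dense, unpenalized block $\pmb{a}$ from \eqref{eq:noisyCS} so that what remains is a \emph{fully} sparse recovery problem in $\pmb{u}$ alone, to which the spark-based Lemmas~\ref{thm:uniqueness} and~\ref{lm:equiv} apply verbatim. The mechanism is orthogonal projection onto the complement of the column space of $\Psi_a$, which is well defined precisely because $\Psi_a$ has full column rank. Write $P = I - \Psi_a(\Psi_a^{\top}\Psi_a)^{-1}\Psi_a^{\top}$ for the orthogonal projector onto $\mathcal{R}(\Psi_a)^{\perp}$, so that $P\Psi_a = 0$ and $P^{\top}=P=P^2$.

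First I would left-multiply the data model \eqref{eq:ConstrCS} by $P$. Since $\Psi_u = I_{nk_f q}$, this yields $P\pmb{z} = P\pmb{u} + P\pmb{w}$, a linear system in the single unknown $\pmb{u}$ whose residual noise still obeys $\|P\pmb{w}\|_2 \le \|\pmb{w}\|_2 \le \eta$ because $P$ is a contraction. Next I would argue that \eqref{eq:noisyCS} is equivalent to the reduced program $\min_{\pmb{u}} \|\pmb{u}\|_1$ subject to $\|P\pmb{z} - P\pmb{u}\|_2 \le \eta$: for any fixed $\pmb{u}$ the minimum over $\pmb{a}$ of $\|\pmb{z} - \pmb{u} - \Psi_a\pmb{a}\|_2$ equals $\|P(\pmb{z}-\pmb{u})\|_2$, attained at the least-squares value $\pmb{a} = (\Psi_a^{\top}\Psi_a)^{-1}\Psi_a^{\top}(\pmb{z}-\pmb{u})$, and the objective $\|\pmb{u}\|_1$ does not see $\pmb{a}$; hence the two problems share the same optimal $\pmb{u}$, and full column rank of $\Psi_a$ then pins down the generating companion $\bar{\pmb{a}}$ through the least-squares map.

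With the reduction in hand I would apply the spark machinery to the reduced sensing matrix $P$ acting on $\pmb{u}$. The key observation is that a set of columns $\{Pe_i\}_{i\in T}$ is linearly dependent iff there is a nonzero $c$ supported on $T$ with $c \in \mathcal{N}(P) = \mathcal{R}(\Psi_a)$, so that $\mathrm{spark}(P)$ equals the smallest $\ell_0$-norm of a nonzero vector in the range of $\Psi_a$. The cardinality hypothesis $\rho_{\pmb{u}}\le \tfrac12$ supplies the dimension requirement $nk_f q \ge 2 s_{\pmb{u}}$ of Lemma~\ref{thm:uniqueness}, and under the spark condition $s_{\pmb{u}} < \tfrac12\,\mathrm{spark}(P)$ Lemma~\ref{lm:equiv} then gives that the sparsest feasible $\pmb{u}$ is both unique and equal to the $\ell_1$-minimizer; lifting back through the least-squares map recovers the unique $(\bar{\pmb{u}},\bar{\pmb{a}})$ and the claimed $\ell_0$--$\ell_1$ equivalence.

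The main obstacle is this last spark condition: full column rank of $\Psi_a$ guarantees that $\pmb{a}$ is identifiable once $\pmb{u}$ is fixed, but it does \emph{not} by itself prevent $\mathcal{R}(\Psi_a)$ from containing a vector with fewer than $2 s_{\pmb{u}}$ nonzeros, which is exactly what would destroy uniqueness of $\pmb{u}$ --- any two sparsest solutions differ by such a vector. I would therefore have to show that the stated hypotheses --- $(\bar{\pmb{u}},\bar{\pmb{a}})$ being genuinely the sparsest feasible point, together with $\rho_{\pmb{u}}\le\tfrac12$ and the Kronecker/block-diagonal structure of $\Psi_a$ in \eqref{eq:Psi_a} --- force $\mathrm{spark}(P) > 2 s_{\pmb{u}}$, and I expect this rank-of-all-submatrices argument, rather than the projection bookkeeping, to be the technical heart of the proof.
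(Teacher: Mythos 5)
Your projection reduction is sound as far as it goes, and it is genuinely different from --- and more careful than --- the paper's own argument. The paper never projects out $\pmb{a}$: its proof simply \emph{assumes $\bar{\pmb{a}}$ is known}, substitutes $\tilde{\pmb{z}} = \pmb{z} - \Psi_a \bar{\pmb{a}}$, and applies Lemmas~\ref{thm:uniqueness} and~\ref{lm:equiv} to the resulting problem whose sensing matrix is $\Psi_u = I_{n k_f q}$, using $\mathrm{spark}(\Psi_u) = n k_f q + 1$ together with $s_{\pmb{u}} \le \frac{1}{2} n k_f q < \frac{1}{2}\,\mathrm{spark}(\Psi_u)$; the full-column-rank hypothesis on $\Psi_a$ enters only through the informal remark that it is needed ``for all information in $\bar{\pmb{a}}$ to be maintained.'' So where you minimize out $\pmb{a}$ exactly via the projector $P$ and are then forced to confront $\mathrm{spark}(P)$, the paper decouples the two blocks by fiat and never meets that question. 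Your partial-minimization identity ($\min_{\pmb{a}} \|\pmb{z}-\pmb{u}-\Psi_a\pmb{a}\|_2 = \|P(\pmb{z}-\pmb{u})\|_2$, with $\pmb{a}$ recovered by least squares) is the correct way to formalize what the paper's ``assume $\bar{\pmb{a}}$ known'' step is gesturing at.

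The genuine gap is the one you name yourself in your final paragraph, and it is not closable from the stated hypotheses --- so your proposal, carried out honestly, exposes the weak point of the theorem's proof rather than completing it. As you observe, $\mathrm{spark}(P)$ equals the smallest $\ell_0$-norm of a nonzero vector in $\mathcal{R}(\Psi_a)$, and full column rank of $\Psi_a$ constrains its nullspace, not the sparsity of its range vectors. Worse, the block structure \eqref{eq:Psi_a} makes the hoped-for bound structurally impossible: perturbing a single row $i$ of a single $A[k]$ by $\delta \ne 0$ produces the range vector whose only nonzero entries are $(z^{(j)}[k])^{\top}\delta$ for $j = 1,\hdots,q$, hence a nonzero vector in $\mathcal{R}(\Psi_a)$ with at most $q$ nonzero entries. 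Thus $\mathrm{spark}(P) \le q$ unconditionally, while the hypothesis $\rho_{\pmb{u}} \le \frac{1}{2}$ permits $s_{\pmb{u}}$ up to $\frac{1}{2} n k_f q \gg q/2$, so the condition $2 s_{\pmb{u}} < \mathrm{spark}(P)$ that your Lemma~\ref{lm:equiv} application requires fails in essentially every regime the theorem covers (including the paper's own experiments, where $s_{\pmb{u}} = k_f q$). The coupling your projection makes visible is real: any two feasible pairs satisfy $\pmb{u}_1 - \pmb{u}_2 = -\Psi_a(\pmb{a}_1 - \pmb{a}_2)$, so a $q$-sparse range vector is exactly the kind of perturbation that can trade support between $\pmb{u}$ and $\pmb{a}$. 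To finish along your route one would need additional hypotheses --- e.g.\ a support-dependent or general-position condition on the outputs $z^{(j)}[k]$ ensuring nonzero vectors of $\mathcal{R}(\Psi_a)$ cannot overlap the relevant supports, or a far stronger sparsity bound of order $q - n$ per time step --- none of which the theorem statement supplies.
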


\begin{proof}
First note that as $\bar{\pmb{a}}$ is a dense vector, we require $\Psi_a$ to be full column rank in order for all information in $\bar{\pmb{a}}$ to be maintained.
Now assume $\bar{\pmb{a}}$ is known, define the vector $\tilde{\pmb{z}} = \pmb{z} - \Psi_a \bar{\pmb{a}}$, and rewrite \eqref{eq:noisyCS} as
\begin{equation}
\label{eq:noisyCSproof}
\begin{split}
\min_{\pmb{u}} \ & \|  \pmb{u} \|_1 \,,  \\
\text{s.t. } &  
\| \tilde{\pmb{z}} - \Psi_u \pmb{u} \|_2 \le \bm{\eta} \,. \\
\end{split}
\end{equation}
As $\Psi_u = I_{n k_f q}$ (identity matrix) is full rank and $\rho_{\pmb{u}} \le \frac{1}{2}$, all subsets of 2$s_{\pmb{u}}$ columns of $\Psi_u$ are also full rank.
Hence, following Lemma \ref{thm:uniqueness}, any ($s_{\pmb{u}} + n k_f q$)-sparse solution to the NP-hard $\ell_0$-problem is unique.
In addition, the same full-rank condition yields that $spark(\Psi_u) = n k_f q + 1$. With $\|\bar{\pmb{u}} \|_0 = s_{\pmb{u}} \le \frac{1}{2} n k_f q < \frac{1}{2}(n k_f q + 1) = \frac{1}{2} spark(\Psi_u)$, the spark equivalence condition in Lemma \ref{lm:equiv} is also satisfied, which guarantees that the solution of our $\ell_1$-relaxation in \eqref{eq:noisyCS} is equivalent to the solution of the corresponding $\ell_0$-problem.
\qed
\end{proof}

This result suggests that if a data-generating signal $(\bar{\pmb{u}},\bar{\pmb{a}})$ exists and is the sparsest signal explaining the measurements, then it will be uniquely recovered with \eqref{eq:noisyCS}, as long as less than half of the entries in $\bar{\pmb{u}}$ are nonzero and the matrix $\Psi_a$, which is determined by the measurement data, is full column rank.
If $a$ is also sparse, this will make the problem easier; the rank conditions on $\Psi_a$ is not necessary anymore, and the sparsity of the overall vector $[\pmb{u}; \pmb{a}]$ is bounded by $\frac{1}{2}(n k_f q+n^2 k_f)$.
Taking Theorem~\ref{thm:sparsity}, further inspection of $\Psi_a$ reveals conditions on the number of experiments and the measured data. 
\begin{corollary} 
\label{thm:persist}
(Rank Conditions on Output Matrices)
$\Psi_a$ being full column rank (as required by Theorem \ref{thm:uniqueness}), implies that

\begin{enumerate}
\item LTV case: For $k = 0, \hdots, k_f - 1$, the matrix $\CMcal Z_k \in \R^{n \times q}$, as defined in \eqref{eq:stacking}, is full row rank.

\item LTI case: The dynamics measured over all times, i.e. the matrix $\left[ \CMcal Z_0 \cdots \CMcal Z_{k_f} \right]$
is full row rank. 

\end{enumerate}
\end{corollary}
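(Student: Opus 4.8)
The plan is to show that, because of the Kronecker and block-diagonal structure of $\Psi_a$, its column rank decouples into rank conditions on the individual state-output matrices $\CMcal Z_k$. Since $\Psi_a$ has $n^2 k_f$ columns in the LTV case (and $n^2$ columns in the LTI case), full column rank means $\text{rank}(\Psi_a)$ equals that number; I will reduce this to a rank condition on a Kronecker product of the form $I_n \otimes \CMcal Z_k^{\top}$ and then invoke the elementary identity $\text{rank}(I_n \otimes B) = n \, \text{rank}(B)$.

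First, for the LTV case, I would exploit the fact that within each experiment $j$ the contribution to $\Psi_a$ is block-diagonal across time steps, with the $k$-th diagonal block equal to $I_n \otimes (z^{(j)}[k])^{\top}$. Consequently, the columns associated with $A[k]$ have nonzero entries only in the rows corresponding to the output at step $k$. Permuting the rows of $\Psi_a$ so that they are grouped by time step rather than by experiment therefore turns $\Psi_a$ into a block-diagonal matrix across the $k_f$ time steps, whose $k$-th block $\Phi_k \in \R^{nq \times n^2}$ is the vertical stack over experiments $j = 1,\dots,q$ of the matrices $I_n \otimes (z^{(j)}[k])^{\top}$. Since a block-diagonal matrix is full column rank exactly when each block is, it suffices to analyze each $\Phi_k$ separately.

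Next I would identify $\Phi_k$ as a permuted Kronecker product. Writing out the entries shows that the column of $\Phi_k$ indexed by the $(i',l)$ entry of $A[k]$ is supported only on the rows with output index $i = i'$, where across experiments it reproduces the $l$-th column of $\CMcal Z_k^{\top}$. Hence, after a row and column permutation, $\Phi_k = I_n \otimes \CMcal Z_k^{\top}$, and therefore $\text{rank}(\Phi_k) = n \, \text{rank}(\CMcal Z_k)$. Full column rank of $\Phi_k$, i.e.\ rank $n^2$, is thus equivalent to $\text{rank}(\CMcal Z_k) = n$, which is precisely the statement that $\CMcal Z_k \in \R^{n \times q}$ has full row rank (in particular forcing $q \ge n$); collecting the blocks proves the LTV claim. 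For the LTI case the block-diagonal structure across time collapses, since all time steps share the single parameter block $\pmb{a} = \text{vec}(A)$, so that $\Psi_a$ stacks $I_n \otimes (z^{(j)}[k])^{\top}$ over all pairs $(j,k)$ into one matrix with $n^2$ columns. The same entrywise permutation argument then yields $\Psi_a = I_n \otimes \big[ \CMcal Z_0 \cdots \CMcal Z_{k_f-1} \big]^{\top}$, so full column rank is equivalent to the matrix $\big[ \CMcal Z_0 \cdots \CMcal Z_{k_f-1} \big]$ collecting all regressor states having full row rank $n$, as claimed.

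I expect the main obstacle to be organizational rather than conceptual: carefully tracking the two interleaved row orderings (experiment-major versus time-major) together with the $\text{vec}$ convention on the columns, so that the permutation bringing $\Phi_k$ into the clean form $I_n \otimes \CMcal Z_k^{\top}$ is exhibited unambiguously and the Kronecker factorization matches the vectorization of $A[k]$ used to define $\pmb{a}$. Once this permutation equivalence is pinned down, the rank computation via $\text{rank}(I_n \otimes B) = n\,\text{rank}(B)$ is immediate.
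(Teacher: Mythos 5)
Your proposal is correct and takes essentially the same route as the paper's proof: both exploit the disjoint row supports of the column blocks of $\Psi_a$ (induced by the Kronecker/blkdiag structure) to permute it into a block-diagonal matrix whose blocks are copies of $\CMcal Z_k^{\top}$ (resp.\ of $\left[ \CMcal Z_0 \cdots \CMcal Z_{k_f-1} \right]^{\top}$ in the LTI case), so that full column rank decouples blockwise; you merely package the final step via $\mathrm{rank}(I_n \otimes B) = n\,\mathrm{rank}(B)$ where the paper argues on each $n$-column block directly. Note also that your ending the LTI stack at $\CMcal Z_{k_f-1}$ agrees with the paper's own proof (the corollary's statement writes $\CMcal Z_{k_f}$, an apparent typo, since $z^{(j)}[k_f]$ never appears as a regressor).
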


\begin{proof}
\begin{enumerate}
\item LTV case: Due to the structure of $\Psi_a$, as defined in \eqref{eq:Psi_a}, each $k$-th block of $n$ columns has at most $q$ rows with nonzero entries, equivalent to the matrix 
\begin{equation}
    \CMcal Z_k^{\top} = \left[ z^{(1)}[k] \cdots z^{(q)}[k]  \right]^{\top} \in \R^{q \times n}, \ \forall k = 0, \hdots, k_f \,,
\end{equation} 
as initially constructed in \eqref{eq:stacking}.
Every consecutive block of $n$ columns in $\Psi_a$ has its nonzero rows in different rows, due to the blkdiag$(\cdot)$) operation.
As such, the full column rank condition proposed for $\Psi_a$ can be reinterpreted as a full column rank condition on each block of $n$ columns, and thus on each matrix $\CMcal Z_k^{\top}$ for $k = 0, \hdots, k_f - 1$.
\item LTI case: Here, $\pmb{a}$ is only $n^2$-dimensional and the blkdiag$(\cdot)$ structure in \eqref{eq:Psi_a} disappears, resulting in \eqref{eq:Psi_a_lti}.
Each consecutive block of $n$ columns in $\Psi_a$ has at most $k_f q$ rows with nonzero entries, equivalent to a row permutation of the matrix
\begin{equation}
    \left[ \CMcal Z_0 \cdots \CMcal Z_{k_f - 1} \right]^{\top} \,.
\label{eq:stackeddata}
\end{equation}
Every consecutive block of $n$ columns in $\Psi_a$ has its nonzero rows in different rows.
As such, the full column rank condition proposed for $\Psi_a$ can be reinterpreted as a full row rank condition on \eqref{eq:stackeddata}. \qed
\end{enumerate} 
\end{proof}

The LTV condition implies that each time step we require sufficient (or persistent) excitation in the system state over all experiments $j = 1, \hdots, q$.
This result confirms that the number of experiments should at least be equal to or greater than the number of state variables, that is $q \ge n$.
The LTI condition implies that we require persistency of excitation in the system over both experiments $j = 1, \hdots, q$ and time steps $k = 0, \hdots k_f - 1$, which is, unsurprisingly, easier to satisfy than the LTV condition.
This confirms that the number of time steps times the number of experiments should at least be greater than or equal to the number of state variables, that is $k_f q \ge n$.
Corollary \ref{thm:persist} provides interesting experimental conditions that are sufficient, which are intuitive from the perspective of system identification, a field that traditionally tries to understand how many and what quality experiments are necessary to guarantee the \emph{identifiability} of a dynamical system model. The notion of \emph{persistency of excitation} covers this general challenge, and is typically used to understand if an input signal is able to excite the different dynamic modes of a system \cite{ljung_global_1994}. When inputs cannot be designed, Corollary \ref{thm:persist} can be used to check if the output measurements reflect persistent excitation, and combine the right data to construct a well-posed problem.

Note that in some settings, the condition $q > n$ can be restrictive, for instance for identifying larger systems with thousands of states. In the context of LTI systems, different researchers have addressed this challenge and showed that the use of multiple inputs per experiment can reduce the necessary number of experiment if states are excited simultaneously \cite{tegner_reverse_2003,gevers_identification_2006}. These principles were further developed for sparse LTI network identification via CS~\cite{hayden_sparse_2016}. Extending these result to LTV systems seems relevant and remains an open problem.

\section{Experimental Validation}
\label{sec:simulation}

The method is tested via synthetic experiments to study its sensitivity to noise, and to relate numerical results to the theoretical results derived in Section \ref{sec:analysis}.
We fix the number of states $n = 10$ and time steps $k_f = 4$, and consider an LTV model. 
We create datasets and increase the number of experiments $q$ and study the method under increasing levels of noise.
The synthetic data sets are constructed using the formulation in \eqref{eq:LTVdyn}.
The entries of each $A[k]$ are drawn from a standard normal distribution and potentially scaled by a factor $\alpha_{A}$.
The input $u^{(j)}[k] \in \mathbb{R}^{n}$ is sparse, and acting on one randomly picked state variable at each time point in every experiment, hence $s = 1 \cdot k_f \cdot q = 4q$. 
The targeted node of each experiment is randomized, such that the nodes are perturbed uniformly throughout all experiments. 
The nonzero input entries are constructed randomly from a standard normal distribution and scaled by a factor $\alpha_{u}$.
$w^{(j)}[k] \in \mathbb{R}^{n}$ represents the noise acting on the state variables, sampled from a bounded distribution (either thresholded standard normal or uniform) and scaled by a factor $\alpha_{w}$.
In our experiments we have set $\alpha_{A} = 1$, $\alpha_u = 1$ and varied $\alpha_w$ for simulating different levels of noise that can be interpreted as noise percentage.
Following this setup, we have that the cardinality percentage of the input $\rho_{\pmb{u}} = \dfrac{s}{n k_f q} = \dfrac{1}{10}$. 
Lemma \ref{thm:sparsity} tells us that $\rho_{\pmb{u}} = \dfrac{1}{10} \le \dfrac{1}{2}$.

To assess the performance of our algorithm, we run a Monte Carlo experiment with $T$ iterations and compare the retrieved signal $\pmb{u}^*$ to the signal $\bar{\pmb{u}}$ that generated the data~$\pmb{z}$. We introduce two complementary metrics. 
First, to assess how many entries in $\pmb{u}^*$ were recovered correctly (regardless of magnitude), we compute the Mean Average Percentage Error of the cardinality ($MAPE_{card}$) by computing the number of false positives ($\# FP$) and false negatives ($\# FN$):
\begin{equation}
\begin{array}{rcl}
MAPE_{card} \left( \{ \pmb{u}^*(t) \}_{t=1}^T \right) &=& \dfrac{1}{T} \displaystyle \sum_{t=1}^T \frac{\# FP + \# FN}{n k_f q} \\
\end{array}
\end{equation}
Second, for the $s$ nonzero entries of the data-generating signal $\bar{\pmb{u}}$, coined $\bar{\pmb{u}}_{nz}$, we compute the Average Root Mean Square Error ($ARMSE_{nz}$) to assess the error in the magnitude:
\begin{equation}
\begin{array}{rcl}
ARMSE_{nz} \left( \{ \pmb{u}^*(t) \}_{t=1}^T \right) &=& \sqrt{\dfrac{1}{Tsq} \displaystyle \sum_{t=1}^T \| \bar{\pmb{u}}_{nz}(t) - \pmb{u}^*_{nz}(t)]  \|^2_2} \\
\end{array}
\end{equation}
For $\bar{\pmb{a}},\pmb{a}^*$, we simply track the standard $ARMSE$ over all entries, unless we try to retrieve a sparse solution. 

Figures \ref{fig:Results01} and \ref{fig:Results02} present the results of our synthetic experiments. In Figure \ref{fig:Results01} we took three different levels of measurement noise (0\%, 1\% and 5\%, equivalent to $\alpha_{w} \in \{ 0 , 0.01 , 0.05 \}$) and increased the number of experiments $q$.
In Figure \ref{fig:Results02} we fixed the number of experiments ($q = 30$), and increased the noise level.
\begin{figure}[h]
  \centering
  \epsfig{width=0.5\textwidth,file=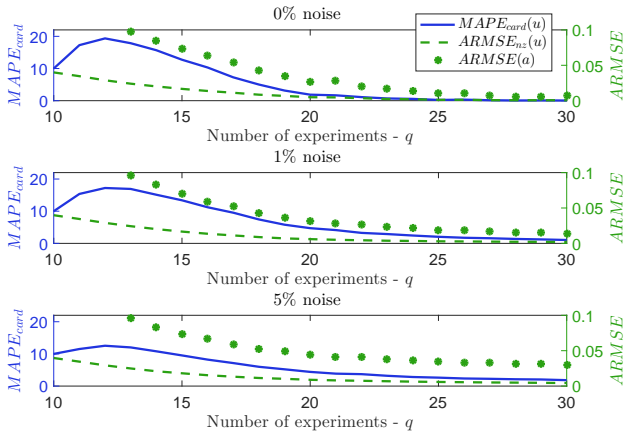}
  \caption{Average entry-wise error for three different levels of noise and increasing number of experiments.}
  \label{fig:Results01}
\end{figure}
\begin{figure}[h]
  \centering
  \epsfig{width=0.5\textwidth,file=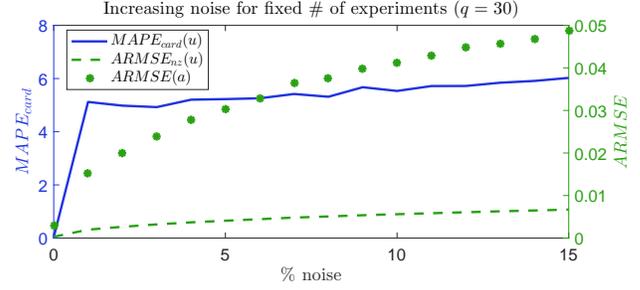}
  \caption{Average entry-wise error for increasing levels of noise and a fixed number of experiments ($q = 30$).}
  \label{fig:Results02}
\end{figure}
As anticipated by Section \ref{sec:analysis}, for the LTV case, we need $q > n$ to correctly retrieve the model parameters and input values. 
Across the different levels of noise, full convergence is reached between 20 to 30 experiments, depending on the necessary accuracy.
Since the unknowns are all drawn from a standard normal distribution, the ARMSE and MAPE metrics can be interpreted as a relative percentage-style error. We see that, for sufficient experiments $q$, the ARMSE and MAPE dive under the noise level added to the dynamics.
As expected, higher levels of noise yield a higher asymptotic error, which is clearly visualized in Figure \ref{fig:Results02}.
We see that the dynamical system model parameters in $\pmb{a}$ absorb an entry-wise error that is roughly a factor 5 higher than the entry-wise error in the nonzero input values of $\pmb{u}$.
Lastly, we notice that the ARMSE errors grow gradually and sublinearly, whereas the MAPE error jumps to 5.3\% for 1\% noise and then increases slightly to 6.2\% for 15\% noise.

\section{Conclusions}
\label{sec:conclusions}

This paper developed a method for blind system identification of discrete-time linear time-varying (LTV) models in settings where all states can be observed.
A sparse recovery problem was formulated to retrieve the dynamical system parameters and unknown input values by exploiting \emph{a priori} knowledge that the effect of unknown inputs is limited to affect a limited number of states and time points.
An optimization problem was formulated as a compressive sensing problem with a partially sparse support, which allowed analysis via sparse recovery theory.
This yielded sufficient conditions stating that the number of experiments should be greater than the number of states, less than half of the input values should be nonzero, and the matrices with system output measurements across time and experiments should be full rank.
Assessment of the method with synthetic data confirmed theoretical insights and provided further directions for designing experiments.
This work opens the further investigation of the blind system identification problem for time-varying systems. 
The authors are exploring the incorporation of more structured inputs to reduce the search space and make the overall recovery problem more efficient, and have applied the method to help design experiments in breast cancer research~\cite{dobbe_heterogeneity_2015}.

\begin{ack}                               
The authors would like to thank Young-Hwan Chang at OHSU and Margaret Chapman for their early involvement and useful feedback, and Joe Gray, James Korkola and Laura Heiser at OHSU for motivating this work through their breast cancer research and experiments. This work was supported by the \emph{Measuring, Modeling and Manipulating Heterogeneity} Grant from Oregon Health Sciences University (award number 1010517).  
\end{ack}

\printbibliography





\end{document}